\documentclass[12pt,onecolumn]{IEEEtran}

\usepackage[lmargin=0.75in,rmargin=0.75in,tmargin=.75in,bmargin=0.7in]{geometry}

\pagestyle{empty}

\usepackage{amsmath}
\usepackage{amssymb}
\usepackage{amsfonts}
\usepackage{epsfig}
\usepackage{subfigure}
\usepackage{calc}
\usepackage{color}
\usepackage[all]{xy}
\usepackage{bm}
\usepackage{hyperref}
\title{Optimal Linear Codes with a Local-Error-Correction Property}

\author{N. Prakash, Govinda M. Kamath, V. Lalitha and P. Vijay Kumar \\
Dept. of ECE, Indian Institute of Science, Bangalore - 560012, India\\
email: \{prakashn, govinda, lalitha, vijay\}@ece.iisc.ernet.in.}


\newtheorem{defn}{Definition}
\newtheorem{thm}{Theorem}
\newtheorem{prop}[thm]{Proposition}
\newtheorem{lem}[thm]{Lemma}
\newtheorem{cor}[thm]{Corollary}
\newtheorem{note}{Remark}

\newtheorem{conj}[thm]{Conjecture}
\newtheorem{constr}[thm]{Construction}

\newcommand{\bit}{\begin{itemize}}
\newcommand{\eit}{\end{itemize}}
\newcommand{\bcor}{\begin{cor}}
\newcommand{\ecor}{\end{cor}}
\newcommand{\beq}{\begin{equation}}
\newcommand{\eeq}{\end{equation}}
\newcommand{\beqn}{\begin{equation*}}
\newcommand{\eeqn}{\end{equation*}}
\newcommand{\bea}{\begin{eqnarray}}
\newcommand{\eea}{\end{eqnarray}}
\newcommand{\bean}{\begin{eqnarray*}}
\newcommand{\eean}{\end{eqnarray*}}
\newcommand{\ben}{\begin{enumerate}}
\newcommand{\een}{\end{enumerate}}
\newcommand{\bdefn}{\begin{defn}}
\newcommand{\edefn}{\end{defn}}
\newcommand{\bnote}{\begin{note}}
\newcommand{\enote}{\end{note}}
\newcommand{\bprop}{\begin{prop}}
\newcommand{\eprop}{\end{prop}}
\newcommand{\blem}{\begin{lem}}
\newcommand{\elem}{\end{lem}}
\newcommand{\bthm}{\begin{thm}}
\newcommand{\ethm}{\end{thm}}
\newcommand{\bconj}{\begin{conj}}
\newcommand{\econj}{\end{conj}}
\newcommand{\bconstr}{\begin{constr}}
\newcommand{\econstr}{\end{constr}}
\newcommand{\bpf}{\begin{proof}}
\newcommand{\epf}{\end{proof}}

\begin{document}

\maketitle

\begin{abstract}
\normalsize
Motivated by applications to distributed storage, Gopalan \textit{et al} recently introduced the interesting notion of information-symbol locality in a linear code. By this it is meant that each message symbol appears in a parity-check equation associated with small Hamming weight,
thereby enabling recovery of the message symbol by examining a small number of other code symbols.  This notion is expanded to
the case when all code symbols, not just the message symbols, are covered by such ``local'' parity.
In this paper, we extend the results of Gopalan et. al. so as to permit recovery of an erased code symbol even in the presence of errors in local parity symbols.
We present tight bounds on the minimum distance of such codes and exhibit codes that are optimal with respect to the local error-correction property.  As a corollary, we obtain an upper bound on the minimum distance of a concatenated code.

\end{abstract}

\section{Introduction}

In \cite{GopHuaSimYek}, Gopalan \textit{et al} introduced the interesting and practically relevant notion of locality of information.  The $i^{th}$ code-symbol $c_i, \ 1 \leq i \leq n$, of an $[n, k, d]$ linear code $\mathcal{C}$ over the field $\mathbb{F}_q$ is said to have \textit{locality} $r$ if this symbol can be recovered by accessing at most $r$ other code symbols of code $\mathcal{C}$.  Equivalently, for any coordinate $i$, there exists a row in the parity-check matrix of the code of Hamming weight at most $r+1$, whose support includes $i$.   An $(r,d)$ code was defined as a systematic linear code $\mathcal{C}$ having minimum distance $d$, where all $k$ message symbols have locality $r$. It was shown that the minimum distance of an $(r,d)$ code is upper bounded by
\begin{equation} \label{eq:bound_gopalan}
d \ \leq \ n - k - \left\lceil{\frac{k}{r}}\right\rceil + 2.
\end{equation}
A class of codes constructed earlier and known as pyramid codes~\cite{HuaCheLi} are shown to be $(r, d)$ codes that are optimal with respect to this bound.

The concept of an $(r, d)$ code was motivated by the problem of designing efficient codes for the distributed storage of data across nodes in a network. Since nodes are prone to failure, there is need to protect the data using an error-correcting code.  A second important requirement in this setting, is the ability to efficiently bring up a failed node. Here, $(r, d)$ codes offer the advantage that in the event of a single node failure, the node can be \textit{locally} recovered by connecting to at most $r$ other nodes.


A natural extension to the concept of an $(r,d)$ code, is a code that would allow local recovery of a failed node, even in the presence of failures in other nodes of the network.  Multiple node failures are not uncommon in distributed data storage, and a number of coding schemes for tolerating such multiple node failures exist in practice \cite{HuaCheLi}\cite{BlaBraBruMen}\cite{CorEngGoeGrcKleLeoSan}.  This motivates the definition of the class of $(r, d, \delta)$ local-error-correction (LEC) codes given below.

\vspace{0.1in}

\begin{defn}
The $i$th code symbol $c_i, 1 \leq i \leq n$, in an $[n, k, d]$ linear code $\mathcal{C}$, will be said to have \textit{locality} $(r, \delta)$ if there exists a punctured subcode of ${\cal C}$ with support containing $i$, whose length is at most $r + \delta - 1$, and whose minimum distance is at least $\delta$.  Equivalently, there exists a subset $S_i \subseteq [n] = \{1, \ldots, n\}$ such that
\begin{itemize}
\item  $i \in S_i$ and $|S_i| \leq r + \delta - 1$,
\item  the minimum distance of the code ${\cal C} |_{S_i}$ obtained by deleting code symbols $c_i, \ i \in [n]\backslash S_i$, is least $\delta$.
\end{itemize}
\end{defn}

\vspace{0.1in}

Since the dual of a punctured code is a shortened code, this also implies that we may regard the parity-check matrix $H$ of the code as containing for some $\nu_i, 1 \leq \nu_i \leq n-k$, a $(\nu_i \times n)$ submatrix $H_i$ having rank $\nu_i$, support $S_i$, and the property that any $\delta-1$ columns of $H_i$ with indices drawn from $S_i$, are linearly independent.

A systematic $[n,k,d]$ linear code $\mathcal{C}$ will be said to be an $(r, \delta)_i$ code, if all $k$ message (or information) symbols have locality $(r, \delta)$. We will also refer to such a code as having \textit{information locality} $(r, \delta)$. It is clear that if we employ an $(r, \delta)_i$ code for the distributed storage of data, a systematic node can be locally repaired by connecting to $r$ other nodes, even if $\delta - 2$ other nodes fail. An additional advantage of an $(r, \delta)_i$ code is that even when the other nodes are intact, the code provides multiple options for locally repairing a failed systematic node, which in a network setting, can be used to balance traffic across the network \footnote{By connecting to any $r$ out of the $r+\delta-2$ nodes which locally protect the failed node, one can recover the failed node.}.   The $(r, d)$ codes introduced by Gopalan \textit{et al} correspond to $(r, 2)_i$ codes in the present notation.

By using properties of the generalized Hamming weights~\cite{Wei} of a code (also known as minimum support weights\cite{HelKloLevYtr}), we will show that the minimum distance of an $(r, \delta)_i$ is upper bounded (Theorem~\ref{thm:information_locality}) by
\begin{equation} \label{eq:bound_locality}
d \ \leq \ n - k + 1 - \left(\left\lceil{\frac{k}{r}}\right\rceil - 1\right)(\delta - 1).
\end{equation}
As was the case with the $(r,d)$ codes introduced in \cite{GopHuaSimYek}, a class of pyramid codes turns out to provide examples of optimal $(r, \delta)_i$ codes, i.e., $(r,\delta)_i$ codes is which the bound in \eqref{eq:bound_locality} is achieved with equality. For the special case when $r|k$, we will identify conditions that the parity check matrix of an optimal $(r, \delta)_i$ code must necessarily satisfy.

We will term a code in which all the $n$ symbols of an $[n,k,d]$ code have locality $(r, \delta)$ as codes as having \textit{all-symbol locality} $(r, \delta)$ and denote such codes as $(r,\delta)_a$ codes.  Thus, whenever we speak of either an $(r,\delta)_i$ or else an $(r,\delta)_a$ code, it will be assumed that the length, dimension and minimum distance of the linear code are understood from the context and are typically denoted by $n,k,d$ respectively.   Clearly, codes with all-symbol locality are a subset of the set of codes with just information locality.  Nevertheless, it turns out that when $(r + \delta -1)|n$, one can show the existence of codes with all-symbol locality $(r, \delta)$, which satisfy the upper bound on minimum distance given in \eqref{eq:bound_locality}.  We will also present an explicit code having all-symbol locality, for the case when the code length $n$ is of the form $n = \left\lceil \frac{k}{r}\right\rceil(r+\delta-1)$.

Through out this write up, we will assume without loss of generality, that the $[n, k, d]$ code $\mathcal{C}$ under study, is systematic, with information symbols present in the first $k$ coordinates. For a codeword $\bf{c} \in \mathcal{C}$, we will use $\text{supp}({\bf{c}})$ to denote the support $\{i \in [n] \left| \ c_i \neq 0  \right. \}$ of the codeword.  The support of a subcode $\mathcal{D}$ of $\mathcal{C}$, is defined by $\text{supp}({\mathcal{D}}) \triangleq \cup_{\bf{c} \in \mathcal{D}} \text{supp}(\bf{c})$. For a set $S \subset [n]$, we will use $\mathcal{C}|_{S}$ and $\mathcal{C}^{S}$ to denote respectively, the punctured and shortened codes of $C$ associated with the coordinate set $S$.  By this we mean that under either the puncturing or shortening operation, the coordinates of the code lying in $[n]\backslash S$ are suppressed. Also, for any set $S$, the cardinality of the set will be denoted by $|S|$.

Section \ref{sec:prelims} presents background on generalized Hamming weights, while codes with information and all-symbol locality are treated in Sections \ref{sec:info_locality} and \ref{sec:all_symbol_locality} respectively.  In the final section, Section~\ref{sec:concatenated_codes}, we present as a corollary, an upper bound on the minimum distance of a concatenated code.

\section{Generalized Hamming weights} \label{sec:prelims}
In this section, we review the definition of the generalized Hamming weight (GHW) of a code~\cite{Wei,HufPle} and see how the GHWs of a code are related to those of its dual.  We introduce the notion of a \textit{gap} which will play an important role in our subsequent proofs.


\vspace{0.1in}

\begin{defn}
The $i^{th}$, $1 \leq i \leq k$, generalized Hamming weight of a code ${\cal C}$ is defined by
\begin{equation}
d_i({\cal C}) \ = \ d_i \ = \  \min_{\substack{ \mathcal{D} < \mathcal{C} \\ \text{dim}(\mathcal{D}) = i }} \left|\text{Supp}({\cal D}) \right| ,
\end{equation}
where $\mathcal{D} < \mathcal{C}$, is used to denote a subcode $\mathcal{D}$ of $\mathcal{C}$.
\end{defn}

\vspace{0.1in}

It is well known that
\begin{equation}
d=d_1 < d_2 < \ldots < d_k = n.
\end{equation}
We will call the complement of the set $\{d_i, 1 \leq i \leq k\}$, in $[n]$, as the set of \textit{gap numbers} (more simply, gaps) of the code ${\cal C}$ and denote them by the set $\{ g_i, \ 1 \leq i \leq n-k \}$, where
\begin{equation}
\{ g_i, \ 1 \leq i \leq n-k \} \ = \ [n] \setminus \{ d_i, \ 1 \leq i \leq k \}.
\end{equation}
Similarly, let the sets $\{d_j^{\perp}, \ 1 \leq j \leq n-k \}$ and $\{ g_i^{\perp}, \ 1 \leq i \leq k \}$ respectively denote the GHWs and gaps of the dual code ${\cal C}^\perp$. The following lemma~\cite{Wei} relates the GHWs of $\mathcal{C}$ to those of $\mathcal{C}^{\perp}$.

\vspace{0.1in}

\begin{lem} \label{lem:GHW_code_dual_relation}
 \begin{equation} \label{eq:GHW_code_dual_relation}
 \{d_i, \ 1 \leq i \leq k\} \ = \ [n] \setminus \{n + 1 - d_j^{\perp}, \ \ 1 \leq j \leq n-k \}.
 \end{equation}
\end{lem}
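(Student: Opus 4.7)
My plan is to establish Wei's duality by tracking, for each $s \in \{0, 1, \ldots, n\}$, the quantity
\beqn
f(s) \ := \ \max\{\dim \mathcal{C}^T : T \subseteq [n],\ |T| = s\},
\eeqn
where $\mathcal{C}^T$ is the shortening of $\mathcal{C}$ onto $T$. Enlarging $T$ by one coordinate can raise the dimension of the shortening by at most one, so $f$ is a nondecreasing step function on $\{0, \ldots, n\}$ with $f(0) = 0$, $f(n) = k$, and all step sizes in $\{0, 1\}$. A short argument shows that $d_r = \min\{s : f(s) \geq r\}$: any $r$-dimensional subcode $\mathcal{D}$ of $\mathcal{C}$ is contained in $\mathcal{C}^{\text{supp}(\mathcal{D})}$, and conversely any $T$ with $\dim \mathcal{C}^T \geq r$ yields an $r$-dimensional subcode with support contained in $T$. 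Hence the jump locations of $f$ are precisely $\{d_1, \ldots, d_k\}$, and analogously the jump locations of the corresponding function $f^\perp$ for $\mathcal{C}^\perp$ are $\{d_1^\perp, \ldots, d_{n-k}^\perp\}$.

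Next I would derive the dimension identity
\beqn
\dim \mathcal{C}^T \ = \ |T| - (n-k) + \dim (\mathcal{C}^\perp)^{[n]\setminus T},
\eeqn
valid for every $T \subseteq [n]$. It rests on two ingredients: (i) $\mathcal{C}^T$ is the kernel of the coordinate-projection map $\mathcal{C} \to \mathcal{C}|_{[n]\setminus T}$, giving $\dim \mathcal{C}^T = k - \dim \mathcal{C}|_{[n]\setminus T}$; and (ii) the classical duality between puncturing and shortening gives $(\mathcal{C}|_{[n]\setminus T})^\perp = (\mathcal{C}^\perp)^{[n]\setminus T}$ inside $\mathbb{F}_q^{n-|T|}$, so $\dim \mathcal{C}|_{[n]\setminus T} = (n-|T|) - \dim (\mathcal{C}^\perp)^{[n]\setminus T}$. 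Maximising the identity over all $T$ with $|T| = s$ (noting that $[n] \setminus T$ ranges over all $(n-s)$-subsets) yields
\beqn
f(s) \ = \ s - (n-k) + f^\perp(n-s).
\eeqn

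Taking first differences gives $[f(s) - f(s-1)] + [f^\perp(n-s+1) - f^\perp(n-s)] = 1$. Since both bracketed quantities lie in $\{0,1\}$, exactly one of them equals one: $s$ is a jump of $f$ if and only if $n+1-s$ is \emph{not} a jump of $f^\perp$. Translating jump sets back to generalized Hamming weights yields $\{d_i\} = [n] \setminus \{n+1 - d_j^\perp\}$, as required.

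The main obstacle I anticipate is keeping the puncture/shortening duality straight between $\mathcal{C}$ and $\mathcal{C}^\perp$ and translating it into the right dimension formula; once this identity is established, the rest reduces to a clean bookkeeping argument for two complementary $\{0,1\}$-valued step functions. I would also verify the boundary cases $s=0$ and $s=n$, which together with $1 \leq d_j^\perp \leq n$ ensure the reflected dual weights $n+1-d_j^\perp$ all lie in $[n]$, so that the complementation statement is well posed.
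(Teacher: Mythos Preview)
Your proof is correct. Note, however, that the paper does not actually prove this lemma: it is quoted with a citation to Wei's original paper as a known result, so there is no ``paper's own proof'' to compare against. Your argument via the step function $f(s)=\max_{|T|=s}\dim\mathcal{C}^{T}$, the dimension identity $\dim\mathcal{C}^{T}=|T|-(n-k)+\dim(\mathcal{C}^{\perp})^{[n]\setminus T}$ obtained from the rank--nullity theorem and the puncture/shorten duality, and the observation that the two $\{0,1\}$-valued first differences sum to $1$, is one of the standard proofs of Wei's duality; every step checks out, and the conclusion that the jump sets of $f$ and $f^{\perp}$ are complementary under $s\mapsto n+1-s$ is exactly the claimed identity.
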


\vspace{0.1in}

In terms of the gaps of the dual code ${\cal C}^\perp$, \eqref{eq:GHW_code_dual_relation} can be rewritten as
\begin{equation} \label{eq:GHW_code_dual_relation_gaps}
d_i = (n+1)-g^{\perp}_{k-i+1} , \ \ 1 \leq i \leq k.
\end{equation}
In particular, the minimum distance $d$ of $\mathcal{C}$ and the largest gap $g^{\perp}_{k}$ of $\mathcal{C}^{\perp}$
are related by
\begin{equation} \label{eq:GHW_dmin_largestgap}
d \ = \ d_1 \ = \ (n+1)- g^{\perp}_{k}.
\end{equation}
This relation will be used to derive an upper bound on the minimum distance of $(r,\delta)_i$ codes.

\section{Codes with information locality} \label{sec:info_locality}

In this section, Theorem \ref{thm:information_locality} will establish the upper bound appearing in \eqref{eq:bound_locality}, on the minimum distance of $(r,\delta)_i$ codes.  It will then be shown that pyramid codes, under an appropriate choice of parameters, are optimal with respect to this bound.   Necessary conditions for optimality of an $(r,\delta)_i$ code for the case when $r|k$, are identified in Theorem \ref{thm:necessary_conditions_info_locality}.

\vspace{0.1in}

\begin{thm} \label{thm:information_locality}
The minimum distance $d$ of an $(r,\delta)_i$ code ${\cal C}$  is upper bounded by
\begin{eqnarray} \label{eq:bound_info_locality}
d \ \leq \ n - k + 1 - \left(\left\lceil{\frac{k}{r}}\right\rceil - 1\right)(\delta - 1).
\end{eqnarray}
\end{thm}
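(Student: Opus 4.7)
The plan is to invoke the duality identity $d = (n+1) - g^{\perp}_k$ from \eqref{eq:GHW_dmin_largestgap} to reduce the task of upper-bounding $d$ to lower-bounding the largest gap of $\mathcal{C}^{\perp}$. Setting $\alpha = k + (\lceil k/r \rceil - 1)(\delta - 1)$, the bound will follow once I establish $g^{\perp}_k \geq \alpha$. Since the dual generalized Hamming weights $d^{\perp}_j$ are strictly increasing, this reduces to exhibiting a single subcode $\mathcal{D} \leq \mathcal{C}^{\perp}$ with $\dim(\mathcal{D}) \geq \alpha - k$ and $|\text{supp}(\mathcal{D})| \leq \alpha - 1$: such a $\mathcal{D}$ forces $\alpha - k$ distinct GHWs of $\mathcal{C}^{\perp}$ into the interval $[1, \alpha - 1]$, leaving at most $k-1$ of the integers in that interval as gaps and thus pushing $g^{\perp}_k$ up to at least $\alpha$.

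To construct $\mathcal{D}$, I leverage the local dual structure. Each information symbol $i$ produces, via the submatrix $H_i$ introduced right after Definition~1, a subcode $\mathcal{D}_i \leq \mathcal{C}^{\perp}$ supported on $S_i$ with $\dim(\mathcal{D}_i) = |S_i| - \dim(\mathcal{C}|_{S_i}) \geq \delta - 1$, where the last inequality is the Singleton bound applied to the local code $\mathcal{C}|_{S_i}$ (length $\leq r+\delta-1$, minimum distance $\geq \delta$, hence dimension $\leq r$). I then greedily select $t = \lceil k/r \rceil - 1$ information symbols $i_1, \ldots, i_t$ as follows: having built $T_j = \bigcup_{\ell \leq j} S_{i_\ell}$, I pick $i_{j+1} \in [k] \setminus T_j$ with $\dim(\mathcal{C}|_{T_j \cup \{i_{j+1}\}}) > \dim(\mathcal{C}|_{T_j})$. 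Such an $i_{j+1}$ exists because $\dim(\mathcal{C}|_{T_j}) \leq jr \leq (t-1)r < k$ while $\dim(\mathcal{C}|_{[k]}) = k$ by systematicity, so not every information coordinate outside $T_j$ can be dependent on $T_j$. Finally, take $T = T_t$ and let $\mathcal{D}$ be the full dual-shortened code $(\mathcal{C}^{\perp})_T$, so that $\dim(\mathcal{D}) = |T| - \dim(\mathcal{C}|_T)$.

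The support bound is immediate: $|T| \leq \sum_j |S_{i_j}| \leq t(r + \delta - 1) = tr + t(\delta - 1) \leq (k - 1) + t(\delta - 1) = \alpha - 1$, using the arithmetic fact that $tr = (\lceil k/r \rceil - 1)r \leq k - 1$. For the dimension, I need $|T| - \dim(\mathcal{C}|_T) \geq t(\delta - 1)$. By submodularity of the rank function $\rho(X) = \dim(\mathcal{C}|_X)$, adding $S_{i_{j+1}}$ to $T_j$ increments the defect $|T_j| - \rho(T_j)$ by at least $(|S_{i_{j+1}}| - \rho(S_{i_{j+1}})) - (|S_{i_{j+1}} \cap T_j| - \rho(S_{i_{j+1}} \cap T_j)) \geq (\delta - 1) - (|S_{i_{j+1}} \cap T_j| - \rho(S_{i_{j+1}} \cap T_j))$. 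When the $S_{i_\ell}$ are pairwise disjoint the correction vanishes and each step contributes exactly $\delta - 1$, summing cleanly to the required $t(\delta - 1)$.

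The main obstacle is controlling the overlap case, where the correction term $|S_{i_{j+1}} \cap T_j| - \rho(S_{i_{j+1}} \cap T_j)$ can be positive. I expect to handle this either by refining the greedy rule to prefer information symbols whose intersection $S_{i_{j+1}} \cap T_j$ is rank-full in $\mathcal{C}$ (making the correction vanish), or by a telescoping argument showing that any excess overlap in one step is offset by a correspondingly sub-maximal rank increase there, so that the aggregate defect still reaches $t(\delta - 1)$ and the bound $g^{\perp}_k \geq \alpha$ survives.
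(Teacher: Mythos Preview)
Your overall strategy---reducing the minimum-distance bound to the inequality $g^{\perp}_k \geq k + (\lceil k/r\rceil - 1)(\delta-1)$ via \eqref{eq:GHW_dmin_largestgap}, and then exhibiting a dual subcode of dimension at least $(\lceil k/r\rceil - 1)(\delta-1)$ supported on at most $\alpha - 1$ coordinates---is exactly the route the paper takes. The gap you yourself flag, however, is real and is precisely the technical heart of the theorem: your greedy construction does not guarantee that each new $S_{i_{j+1}}$ contributes a full $\delta-1$ to the defect $|T| - \rho(T)$ when the local supports overlap, and neither of your proposed fixes is pinned down. Refining the greedy rule to pick $i_{j+1}$ with $S_{i_{j+1}}\cap T_j$ rank-full need not be possible (nothing in the hypotheses forces such a choice to exist), and ``telescoping the excess'' is not an argument until you identify what quantity is being conserved.

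The paper closes this gap by a different mechanism. Rather than pick $\lceil k/r\rceil - 1$ local subspaces greedily and hope each contributes $\delta-1$, it proves that the maximum number $a$ of local subspaces $V_i$ that can be chosen so that each contributes at least $\delta-1$ to their sum satisfies $a \geq \lceil k/r\rceil$. The key device is to look at the span of \emph{all} the local subspaces $V_1,\ldots,V_k$: its support contains $[k]$, so Lemma~\ref{lem:parity_rank} gives $|\text{supp}| \geq k + \text{rank}$. One then splits the $V_i$ into the $a$ ``good'' ones and the remaining ``bad'' ones (those contributing fewer than $\delta-1$ new dimensions). For each bad $V_i$ the MDS property of the local check matrix $H_i$ (any $\delta-1$ columns independent) forces the support increment $\Delta s_i$ to be bounded by the rank increment $\Delta\nu_i$. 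Combining the support/rank inequality with $\Delta s_i \leq r+\delta-1$ for good terms and $\Delta s_i \leq \Delta\nu_i$ for bad terms yields $ar \geq k$, hence $a \geq \lceil k/r\rceil$. Only then does one take $\lceil k/r\rceil - 1$ of the good subspaces to produce the required dual subcode. Your argument is missing an analogue of this step---some global constraint (like Lemma~\ref{lem:parity_rank}) that rules out the possibility that too many of the local subspaces overlap destructively.
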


\begin{proof}
From \eqref{eq:GHW_dmin_largestgap}, the minimum distance of ${\cal C}$, in terms of the largest gap of $\mathcal{C}^{\perp}$ is given by
\begin{equation} \label{eq:GHW_dmin_largestgap_repeat}
d \ = \ (n+1)- g^{\perp}_{k}.
\end{equation}
The desired upper bound on $d$ will be obtained by showing the corresponding lower bound on $g^{\perp}_{k}$. This lower bound on  $g^{\perp}_{k}$ will in turn, be deduced from an appropriate upper bound on the $\left(  \lceil \frac{k}{r} \rceil - 1 \right)  (\delta-1)^{th}$ GHW, $d_{\left(  \left\lceil \frac{k}{r} \right\rceil - 1 \right)  (\delta-1)}^{\perp}$, of $\mathcal{C^{\perp}}$. It will be established in the next subsection, that under the conditions of Theorem \ref{thm:information_locality},
\bea \label{eq:proof_info_locality} \nonumber
\left(  \left\lceil \frac{k}{r} \right\rceil - 1 \right)  (\delta-1) & < & n-k  , \\
d_{\left(  \left\lceil \frac{k}{r} \right\rceil - 1 \right)  (\delta-1)}^{\perp}  & \leq &   \left(  \left\lceil \frac{k}{r} \right\rceil - 1 \right)  (r + \delta-1) .
\eea
Let  $d_{\left(  \left\lceil \frac{k}{r} \right\rceil - 1 \right)  (\delta-1)}^{\perp} \ = \ s$.  Then the number of gaps in the dual that do not exceed $s$ is given by
\begin{eqnarray}
\left| \left\{g_j^{\perp} \mid  g_j^{\perp} \leq s\right\}\right| & = &  s- \left(  \left\lceil \frac{k}{r} \right\rceil - 1 \right)  (\delta-1) \\
   & & \leq \left(  \left\lceil \frac{k}{r} \right\rceil - 1 \right)  (r + \delta-1) -  \left(  \left\lceil \frac{k}{r} \right\rceil - 1 \right)  (\delta-1) \nonumber \\
 &  = & \ r\left\lceil \frac{k}{r} \right\rceil - r \ < \ k.
\end{eqnarray}

Since there are a total of $k$  gaps in the dual code ${\cal C}^{\perp}$, there must be at least an additional $k -  \left[ s- \left(  \left\lceil \frac{k}{r} \right\rceil - 1 \right)  (\delta-1) \right]$  gaps that exceed $s$ and hence the last gap in the dual, $g^{\perp}_{k}$, satisfies the lower bound:
\bea \label{eq:upper_bound_largest_gap}  \nonumber
g^{\perp}_{k} & \geq & s \ + \ k -  \left[ s- \left(  \left\lceil \frac{k}{r} \right\rceil - 1 \right)  (\delta-1) \right]  \\
&  = &  k \ + \ \left(  \left\lceil \frac{k}{r} \right\rceil - 1 \right)  (\delta-1)  .
\eea

Combining \eqref{eq:upper_bound_largest_gap} and \eqref{eq:GHW_dmin_largestgap_repeat}, we get \eqref{eq:bound_info_locality}.
\end{proof}

\subsection{Proof of \eqref{eq:proof_info_locality}} \label{sec:proof_info_locality}

We begin with a useful lemma.


\begin{lem}\label{lem:parity_rank}
Let $\mathcal{C}$ be a systematic $[n,k,d]$ linear code whose first $k$ coordinates correspond to message symbols.    Let $S$ be a subset of $[n]$ of size $s$, such that $[k] \subseteq S$. Let ${\cal P}$ denote a sub code, supported on $S$, of the dual code ${\cal C}^{\perp}$, i.e., every code symbol in every codeword in ${\cal P}$ is zero outside of $S$.  Also, let $Q= [A_{m \times k} | B_{m \times (n-k)} ]$ with $m \geq p$, be a rank $p$, $(m \times n)$ generator matrix for ${\cal P}$.    Then we must have $\text{rank} (B) =p$ and hence $s-k \geq p$.
\end{lem}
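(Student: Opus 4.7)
The plan is to exploit the systematic structure of $\mathcal{C}$, which forces the parity-check matrix of $\mathcal{C}$ (equivalently, a generator matrix of $\mathcal{C}^{\perp}$) to take the form $H = [-P^{T} \mid I_{n-k}]$ for some $k \times (n-k)$ matrix $P$. The key consequence I will use is that every codeword of $\mathcal{C}^{\perp}$, written as $[\mathbf{a} \mid \mathbf{b}]$ with $\mathbf{a} \in \mathbb{F}_{q}^{k}$ and $\mathbf{b} \in \mathbb{F}_{q}^{n-k}$, is uniquely determined by its $(n-k)$-coordinate suffix $\mathbf{b}$ (indeed $\mathbf{a} = -\mathbf{b} P^{T}$). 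Thus the projection $\pi : \mathcal{C}^{\perp} \to \mathbb{F}_{q}^{n-k}$ that discards the first $k$ coordinates is injective.

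Applying this injective projection to the subcode $\mathcal{P} \leq \mathcal{C}^{\perp}$, I conclude that $\pi|_{\mathcal{P}}$ is injective as well, and hence the image $\pi(\mathcal{P})$ has dimension exactly $p = \dim \mathcal{P}$. But the row space of $B$ is precisely $\pi(\mathcal{P})$, since the rows of $Q = [A \mid B]$ span $\mathcal{P}$ and $B$ is obtained by dropping their first $k$ coordinates. This yields $\operatorname{rank}(B) = p$, the first conclusion.

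For the second conclusion, I use the support hypothesis. Since every codeword of $\mathcal{P}$ is zero outside $S$ and every row of $Q$ lies in $\mathcal{P}$, every column of $B$ indexed by a coordinate in $\{k+1,\dots,n\} \setminus S$ is the zero column. Because $[k] \subseteq S$, the number of columns of $B$ that can possibly be nonzero equals $|S \setminus [k]| = s - k$. Therefore $\operatorname{rank}(B) \leq s - k$, which combined with $\operatorname{rank}(B) = p$ gives $s - k \geq p$.

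There is no real obstacle here: the statement is essentially a restatement of the fact that in a systematic code the parity coordinates of a dual codeword determine its message coordinates, together with a column-count bound coming from the support restriction. The only care needed is to translate the support assumption on $\mathcal{P}$ (a statement about codewords) into a statement about columns of the specific generator matrix $Q$, which is immediate because every row of $Q$ is a codeword of $\mathcal{P}$.
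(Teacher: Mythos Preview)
Your proof is correct and follows essentially the same approach as the paper: both arguments show that no nonzero dual codeword can be supported entirely on the first $k$ (message) coordinates, which forces $\operatorname{rank}(B)=p$, and then bound the number of nonzero columns of $B$ by $s-k$. The paper phrases the first step as a short contradiction (a nonzero $[\mathbf{a}\mid\mathbf{0}]\in\mathcal{C}^{\perp}$ would impose a linear relation on the message symbols), whereas you make the injectivity of the projection explicit via the systematic form $H=[-P^{T}\mid I_{n-k}]$; these are the same observation.
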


\begin{proof}  Suppose $\text{rank} (B)  < p$.  Then the row space of $Q$ would contain nonzero vectors in its row space which are supported (i.e., nonzero in) only in the first $k$ message symbol coordinates. This is not possible as this would imply a relationship amongst the message symbols of the code ${\cal C}$.  Hence $\text{rank} (B)  = p$.  We also know that the number of nonzero columns in $B$ is less than or equal to $s-k$.  It follows that $s-k \geq p$. \end{proof}

\vspace*{0.2in}

We are now ready to prove that
\bean
\left(  \left\lceil \frac{k}{r} \right\rceil - 1 \right)  (\delta-1)  & < &  n-k
\eean
and
\bean
d_{\left(  \left\lceil \frac{k}{r} \right\rceil - 1 \right)  (\delta-1)}^{\perp}  & \leq &   \left(  \left\lceil \frac{k}{r} \right\rceil - 1 \right)  (r + \delta-1) .
\eean
For $i \in [k]$, let the $i^{th}$ code (message) symbol be locally protected by a code associated to the parity check matrix $H_i$, whose support is $S_i$ of size $|S_i|=s_i \leq r+\delta-1$. Let $V_i$ denote the row space of $H_i$ and let $\nu_i$ be its dimension. Since the null space of $H_i$ must define a code whose minimum distance is greater than or equal to $\delta$, we must have that $\nu_i \geq \delta - 1, \ \forall i \in [k]$.
Let us set $\Psi  =  \cup_{i=1}^k S_i$ and $s := |\Psi|$.

 Let $a$ be the largest integer such that there exists a subset $\{V_{i_j}\}_{j=1}^a$ with the property that if
\bea \label{eq:define_a1}
W_a & = & V_{i_1}+V_{i_2}+\cdots V_{i_a},
\eea
then for every $j_0$, $1 \leq j_0 \leq a$, we have
\bea \label{eq:define_a2}
\dim (W_a) \ - \ \dim \left( \sum_{1 \leq j \leq a, j \neq j_0} V_{i_j} \right)& \geq & \delta-1 .
\eea
In other words, each subspace $V_{i_j}$ contributes at least $(\delta-1)$ to the total dimension.
Clearly, such an $a$ exists, for $a\geq 1$ is trivially true.
Without loss of generality, we reorder the indices so that $V_{i_j}=V_j, 1 \leq j \leq a$.

We next define $W_0=\{\underline{0}\}$, $\Psi_o=\phi$ and for $1 \leq i \leq a$,
\begin{equation} \label{eq:WPsi_kbyr}
\Psi_i  =  \cup_{j=1}^i S_j, \ \  W_i  =  \sum_{j=1}^i V_j
\end{equation}
\begin{equation*}
\Delta \nu_i  =  \dim( W_i ) - \dim (W_{i-1}), \ \ \Delta s_i  =  \mid \Psi_{i} \setminus \Psi_{i-1} \mid.
\end{equation*}
Clearly,
\begin{eqnarray}
\Delta \nu_i & \geq & (\delta -1) \label{eq:two_cond1} \\
\Delta s_i & \leq & (r+\delta-1) \nonumber.
\end{eqnarray}
We now examine each subspace $V_i$ for $i=a+1,a+2,\cdots,k$ in turn.  Set
\bean
\dim (V_i + W_a) - \dim(W_a) & = & \Delta \nu_i \\
\mid S_i \setminus \Psi_a \mid & = & \Delta s_i  .
\eean
Clearly we must have
\bea  \nonumber
\Delta \nu_i &  \leq & (\delta-2) \\
\Delta s_i  & \leq & \Delta \nu_i .
\eea
The second property follows since any subset of $(\delta-1)$ or less columns of each matrix $H_i$ forms a
linearly independent set.   If either $\Delta \nu_i=0$ or $\Delta s_i=0$ we can discard $V_i$ without affecting the locality property.   Let $i_0>a$
 be the first index that has not been discarded.
We reorder the indices of the remaining $V_i$ so that the indices of $V_i, 1 \leq i \leq a$ remain unchanged
and $V_{i_0}=V_{a+1}$ and set
\bean
W_{a+1} & = & W_a + V_{a+1} \\
\Psi_{a+1} & = & \Psi_a \cup S_{a+1}.
\eean
Then
\begin{eqnarray*}
\dim (W_{a+1}) - \dim(W_a) & = & \Delta \nu_{a+1}  \\
\mid \Psi_{a+1} \setminus \Psi_a \mid & = & \Delta s_{a+1} .
\end{eqnarray*}

 Continuing in this fashion with $a$ replaced by $(a+1)$, we will eventually arrive at $W_{a+b}$ and $\Psi_{a+b}$ with
\begin{eqnarray}
\dim (W_{a+i}) - \dim(W_{a+(i-1)}) & = & \Delta \nu_{a+i} \ \leq \ (\delta-2) \nonumber \\
\mid \Psi_{a+i} \setminus \Psi_{a+(i-1)} \mid & = & \Delta s_{a+i} \ \leq  \Delta \nu_{a+i} \label{eq:two_cond2},
\end{eqnarray}
for $1 \leq i \leq b \leq k - a$. \ \  Let \bean
H & = & \left[ \begin{array}{c} H_1 \\ H_2 \\ \vdots \\ H_{a+b} \end{array} \right] .
\eean
Then
\bean
\text{rank}(H) & = &  \sum_{i=1}^{a+b} \Delta \nu_{i}, \\
\text{Supp}(H) & = & \Psi_{a+b} \\
\mid \text{Supp}(H) \mid & \leq & \sum_{i=1}^a \Delta s_i + \sum_{i=a+1}^{a+b} \Delta s_{i} \\
& \leq & a(r+(\delta-1)) + \sum_{i=a+1}^{a+b} \Delta s_{i} .
\eean
We are now in a position to apply Lemma \ref{lem:parity_rank}.  This is because the row space of the matrix $H$ can be regarded as a sub code, supported on $\Psi_{a+b}$, of the dual code ${\cal C}^{\perp}$.  Hence, from Lemma \ref{lem:parity_rank}, it must be that:
\bean
\text{Supp}(H) & \geq &  k \ + \ \text{rank}(H) \\
& = & k \ + \ \sum_{i=1}^{a+b} \Delta \nu_{i}.
\eean

 From the two expressions above for the size of the support of $H$, we obtain that
\begin{eqnarray}
a(r + (\delta-1)) + \sum_{i=a+1}^{a+b} \Delta s_i  & \geq & k + a(\delta-1) + \sum_{i=a+1}^{a+b} \Delta \nu_i \label{eq:a} \\
\implies ar & \geq  & k +  \sum_{i=a+1}^{a+b} (\Delta\nu_i -\Delta s_i) \nonumber \\
 \implies  a &  \geq & \left\lceil \frac{k}{r} \right\rceil, \label{eq:a1}
\end{eqnarray}
where \eqref{eq:a} and \eqref{eq:a1} follow from \eqref{eq:two_cond1} and \eqref{eq:two_cond2}, respectively.

It follows that the $\text{rank}(H) \geq a(\delta-1) > (\delta-1)(\lceil \frac{k}{r} \rceil - 1)$. Also, since $\text{rank}(H) \leq (n-k)$, we get that
\bean
(n-k) \ > \ (\delta-1)(\lceil \frac{k}{r} \rceil - 1),
\eean and it is hence meaningful to speak of $d_{\left(  \lceil \frac{k}{r} \rceil - 1 \right)  (\delta-1)}^{\perp}$.
Since the support of each submatrix $H_i$ is $\leq (r+\delta-1)$, we have that
\bean
d_{\left(  \lceil \frac{k}{r} \rceil - 1 \right)  (\delta-1)}^{\perp}  & \leq &   \left(  \lceil \frac{k}{r} \rceil - 1 \right)  (r + \delta-1) ,
\eean
and with this, we have recovered the two inequalities appearing in \eqref{eq:proof_info_locality}.

\vspace{0.1in}

\begin{cor}\label{cor:largest_gap}
For an $(r, \delta)_i$ code $\mathcal{C}$ that achieves the bound in \eqref{eq:bound_info_locality} with equality, we have
\begin{equation}
  d_{\left(\left\lceil \frac{k}{r} \right\rceil -1 \right)(\delta-1)  + i}^{\perp} =  k + \left(\left\lceil \frac{k}{r} \right\rceil -1 \right)(\delta-1)  + i ,
\end{equation}
for $1 \leq i \leq n - k -\left((\delta-1) \left( \lceil \frac{k}{r} \rceil -1 \right)\right).$
\end{cor}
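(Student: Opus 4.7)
The plan is to exploit the identity $d = (n+1) - g^{\perp}_{k}$ from equation~(7), which, under the equality hypothesis of the corollary, pins down the largest gap of the dual code exactly. Writing $t := \left(\lceil k/r \rceil - 1\right)(\delta-1)$, the tightness of the bound in Theorem~\ref{thm:information_locality} forces
\begin{equation*}
g^{\perp}_{k} \;=\; (n+1) - d \;=\; k + t.
\end{equation*}
Hence all $k$ gaps of $\mathcal{C}^{\perp}$ lie in $\{1, 2, \ldots, k+t\}$, and by complementation the set of dual GHWs contains exactly $t$ elements in $\{1, \ldots, k+t\}$. In particular $d^{\perp}_{t} \leq k+t$ and $d^{\perp}_{t+1} \geq k+t+1$.

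Next I would upgrade this to every index $i$ with $1 \leq i \leq n-k-t$. The strict monotonicity $d^{\perp}_{j} < d^{\perp}_{j+1}$ applied starting from $d^{\perp}_{t+1} \geq k+t+1$ yields the lower bound $d^{\perp}_{t+i} \geq k + t + i$. For the matching upper bound, I would use $d^{\perp}_{n-k} = n$ together with strict monotonicity in the reverse direction: the $n-k-t-i$ GHWs indexed above $t+i$ must take distinct values in $\{d^{\perp}_{t+i}+1, \ldots, n\}$, which forces $d^{\perp}_{t+i} \leq n - (n-k-t-i) = k + t + i$. Equating the two bounds gives the claimed equality.

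The only substantive step is extracting $g^{\perp}_{k} = k+t$ from the equality hypothesis; everything after that is a short counting argument using nothing more than the definition of gaps and the strict monotonicity of the generalized Hamming weights. I do not anticipate a real obstacle, since the corollary is essentially a restatement of the tightness of the inequality chain in \eqref{eq:upper_bound_largest_gap} combined with the pigeonhole observation that $k$ strictly increasing gaps bounded by $k+t$ leave no room for slack in the remaining dual GHWs.
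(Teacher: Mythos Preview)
Your argument is correct and follows essentially the same route as the paper: from equality in \eqref{eq:bound_info_locality} you deduce $g^{\perp}_{k}=k+t$ with $t=(\lceil k/r\rceil-1)(\delta-1)$, hence exactly $t$ dual GHWs lie in $\{1,\ldots,k+t\}$, and then strict monotonicity together with $d^{\perp}_{n-k}=n$ forces $d^{\perp}_{t+i}=k+t+i$ for all $1\le i\le n-k-t$. The paper's proof is the same idea stated more tersely (it records $d^{\perp}_{t+1}=g^{\perp}_{k}+1$ and says ``the corollary follows''); your version simply spells out the matching upper and lower bounds.
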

\begin{proof}
For an optimal $(r, \delta)_i$ code, the largest gap (see \eqref{eq:GHW_dmin_largestgap_repeat}) $g^\perp_k = k + (\lceil\frac{k}{r}\rceil-1)(\delta - 1)$.
Thus there are exactly $\left(k + (\lceil\frac{k}{r}\rceil-1)(\delta - 1) \right) - k = (\lceil\frac{k}{r}\rceil-1)(\delta - 1)$ generalized dual distances ${d_i^{\perp}}$ such that $d_i^{\perp} < k + (\lceil\frac{k}{r}\rceil-1)(\delta - 1)$. Hence
\begin{equation}
  d_{\left(\left\lceil \frac{k}{r} \right\rceil -1 \right)(\delta-1)  + 1}^{\perp} =  g^\perp_k + 1
\end{equation}
and the corollary follows.
\end{proof}

\subsection{Optimality of Pyramid Codes for Information Locality} \label{sec:pyramid_codes}

We will now show that for the case $\delta \leq d$, under a suitable choice of parameters, Pyramid codes\cite{HuaCheLi} achieve the bound in Theorem \ref{thm:information_locality} with equality.

Consider an $[k+d-1,k,d]$ systematic MDS code over $\mathbb{F}_q$ having generator matrix of the form
\begin{equation}
G = \left [ \begin{array}{c|c} I_{k \times k} & Q_{k \times (d-1)} \end{array} \right].
\end{equation}
We will now proceed to modify $G$ to obtain the generator matrix for an optimal$(r,\delta)_i$ code. Let $k = \alpha r + \beta, 0 \leq \beta \leq (r-1)$ and $\delta \leq d$.  We now partition $Q$ into submatrices as shown below:
\begin{equation}
Q = \left [ \begin{array}{c|c} Q_1 & \\ \vdots & Q' \\ Q_{\alpha} & \\ Q_{\alpha+1} & \end{array} \right],
\end{equation}
where
$Q_i, 1 \leq i \leq \alpha$  are matrices of size $ r \times (\delta - 1)$, $Q_{\alpha+1}$ is of size $ \beta \times (\delta-1)$ and $Q'$ is a $k \times (d-\delta)$ matrix. Consider a second generator matrix $G'$ obtained by splitting the first $(\delta -1)$ columns of $Q$ as shown below:
\begin{equation}
G' = \left [ \begin{array}{cccc|cccc|c} I_r &&&& Q_1 &&&& \\ & \ddots &&& & \ddots &&&Q' \\ & & I_r & & & & Q_{\alpha} & &\\&&&I_{\beta} &&&&Q_{\alpha+1} & \end{array} \right],
\end{equation}
Note that $G'$ is a $k \times n$ full rank matrix, where
\begin{equation} \label{eq:meets_req}
n = k+d-1 + (\left \lceil \frac{k}{r} \right \rceil - 1)(\delta - 1).
\end{equation}
Clearly, by comparing the matrices $G$ and $G^{'}$, it follows that the code, ${\cal C}$, generated by $G'$ has minimum distance no smaller than $d$.  Furthermore, $\mathcal{C}$ is an $(r,\delta)_i$ code.  Hence, it follows from \eqref{eq:meets_req} that ${\cal C}$ is an optimal $(r,\delta)_i$ code.

\subsection{The structure of an optimal $(r,\delta)_i$ code, when $r|k$} \label{sec:bound_equality}
In this section, we will assume that $r|k$. We  borrow notation and intermediate steps used in the proof of Theorem \ref{thm:information_locality}.
\begin{thm} \label{lem:step1}
 If an $[n, k, d]$ linear code $\mathcal{C}$ having information locality $(r, \delta)$ achieves the bound in \eqref{eq:bound_info_locality} with equality, then $|S_i| = r + \delta - 1, \ S_i \cap S_j = \phi, \ 1 \leq i < j \leq a$ and $\left(\mathcal{C}^{\perp}\right)^{S_i}$ is MDS, $1 \leq i \leq a$, where $a$ is as defined together by \eqref{eq:define_a1} and \eqref{eq:define_a2}.
\end{thm}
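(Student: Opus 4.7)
The plan is to turn optimality into a rigid constraint on every $k/r$-subset of the essential local parities $V_1,\ldots,V_a$ built in the proof of Theorem~\ref{thm:information_locality}, by pushing on Corollary~\ref{cor:largest_gap} with the choice $i=\delta-1$. That instance reads $d^{\perp}_{(k/r)(\delta-1)} = k + (k/r)(\delta-1) = (k/r)(r+\delta-1)$, and its range requirement $(k/r)(\delta-1) \le n-k$ is automatic, since $W_a \subseteq \mathcal{C}^{\perp}$ together with $\dim W_a \ge a(\delta-1) \ge (k/r)(\delta-1)$ already forces it.

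Next, fix any $J \subseteq [a]$ with $|J|=k/r$ (which exists because Theorem~\ref{thm:information_locality} gives $a \ge k/r$), and set $W = \sum_{j \in J} V_j$. Using the essential property I extract, for each $j \in [a]$, a subspace $U_j \subseteq V_j$ of dimension $\delta-1$ satisfying $U_j \cap \sum_{m \in [a]\setminus\{j\}} V_m = \{0\}$; such $U_j$ exist because the essential condition translates via the modular law into $V_j \cap \sum_{m\ne j} V_m$ having codimension at least $\delta-1$ in $V_j$. A short linear-independence check shows the $U_j$ for $j \in J$ remain jointly independent in $W$, hence $\dim W \ge (k/r)(\delta-1)$. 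Subadditivity of support gives $|\text{Supp}(W)| \le \sum_{j \in J}|S_j| \le (k/r)(r+\delta-1)$.

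Now I squeeze from both sides. By monotonicity of the generalized Hamming weights, $|\text{Supp}(W)| \ge d^{\perp}_{\dim W} \ge d^{\perp}_{(k/r)(\delta-1)} = (k/r)(r+\delta-1)$, which together with the previous upper bound forces $|\text{Supp}(W)| = \sum_{j \in J}|S_j| = (k/r)(r+\delta-1)$. The equality $|\text{Supp}(W)| = \sum_{j \in J}|S_j|$ requires both pairwise disjointness of the $S_j$, $j \in J$, and that each $|S_j|$ attain the maximum $r+\delta-1$. The strict monotonicity of $d^{\perp}_{\,\cdot\,}$ together with Corollary~\ref{cor:largest_gap} also forbids $\dim W > (k/r)(\delta-1)$, since otherwise $d^{\perp}_{\dim W} \ge (k/r)(r+\delta-1)+1$ would exceed $|\text{Supp}(W)|$; so $\dim W = (k/r)(\delta-1)$. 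Disjoint supports make the $V_j$ linearly independent, giving $\sum_{j\in J}\nu_j = \dim W = (k/r)(\delta-1)$, and combined with $\nu_j \ge \delta-1$ this forces $\nu_j = \delta-1$ for every $j\in J$.

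Letting $J$ range over $k/r$-subsets containing a chosen index (respectively a chosen pair of indices) promotes the three statements $|S_j|=r+\delta-1$, $\nu_j=\delta-1$, and $S_j \cap S_{j'}=\emptyset$ to all of $[a]$. The MDS conclusion is then automatic: $\mathcal{C}|_{S_j}$ has length $r+\delta-1$, dimension exactly $r$ because $\nu_j=\delta-1$, and minimum distance at least $\delta$ by the locality definition, so Singleton tightens this to equality and yields an $[r+\delta-1,r,\delta]$ MDS code; its dual $(\mathcal{C}^{\perp})^{S_j}$ is the $[r+\delta-1,\delta-1,r+1]$ MDS code claimed. The main obstacle is the dimensional lower bound $\dim W \ge (k/r)(\delta-1)$ on arbitrary subsets of essentials; the $U_j$-extraction is the trick that makes this bound survive restriction to $J$.
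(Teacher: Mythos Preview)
Your argument is correct and mirrors the paper's: both invoke Corollary~\ref{cor:largest_gap} at $i=\delta-1$ to pin $d^{\perp}_{(k/r)(\delta-1)}=(k/r)(r+\delta-1)$ and then squeeze the obvious support and dimension bounds on $\sum_{j} V_j$ into equalities, from which disjointness, $|S_j|=r+\delta-1$, and the MDS property drop out. Your $U_j$-extraction and explicit sweep over $k/r$-subsets $J\subseteq[a]$ is simply a cleaner way of executing what the paper does implicitly---it applies its sequential estimates to $W_{k/r}$ and $\Psi_{k/r}$ and then tacitly relies on the arbitrariness of the ordering of $V_1,\ldots,V_a$ to promote the conclusions from $[k/r]$ to all of $[a]$.
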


\begin{proof}
 Since, from \eqref{eq:a1}, we have $a \geq \frac{k}{r}$, we get that $\text{dim}(W_{\frac{k}{r}}) \geq \frac{k}{r}(\delta - 1) \ \text{and} \  |\Psi_{\frac{k}{r}}| \leq \frac{k}{r}(r+\delta-1)$,
where $W_{\frac{k}{r}}$ and $\Psi_{\frac{k}{r}}$ are as defined in \eqref{eq:WPsi_kbyr}. But from Corollary \ref{cor:largest_gap}, substituting $i = \delta - 1$, we get that $d^{\perp}_{\frac{k}{r}(\delta - 1)} \ = \ k +  \frac{k}{r}(\delta - 1)$.
Hence, it must be true that
\begin{eqnarray}
\text{dim}(W_{\frac{k}{r}}) & = & \frac{k}{r}(\delta - 1) \label{eq:proof1_Wkr}
\end{eqnarray}
and
\begin{eqnarray}
|\Psi_{\frac{k}{r}}| & = & \frac{k}{r}(r+\delta-1). \label{eq:proof2_Psikr}
\end{eqnarray}
Now, since $\forall i \in [a]$, $|S_i| \leq r + \delta - 1$, from \eqref{eq:proof2_Psikr}, it follows that $|S_i|  = r + \delta - 1$ and
\begin{equation} \label{eq:proof3_SiSj}
S_i \cap S_j = \phi, \ 1 \leq i < j \leq a.
\end{equation}
Combining \eqref{eq:proof1_Wkr} and \eqref{eq:proof3_SiSj}, we also get that $\text{dim}((\mathcal{C}^{\perp})^{S_i}) = \delta -1, \forall i \in [a]$. This implies that  the dual of $(\mathcal{C}^{\perp})^{S_i}$,  which is the code, $\mathcal{C}\left|_{S_i}\right.$, has dimension $|S_i| - (\delta -1) = r$.
Now, noting $\mathcal{C}\left|_{S_i}\right.$ has parameters $[r+\delta-1, r, \delta]$, it follows that $\mathcal{C}\left|_{S_i}\right.$ and hence $(\mathcal{C}^{\perp})^{S_i}$ are MDS, $\forall i \in [a]$.
\end{proof}
\vspace{0.1in}

\begin{thm} \label{thm:necessary_conditions_info_locality}
If an $[n, k, d]$ linear code $\mathcal{C}$ having information locality $(r, \delta)$ achieves the bound in \eqref{eq:bound_info_locality} with equality and $d < r+2\delta-1$, then $\delta \leq d$ and up to a reordering of columns, the parity check matrix, $H$ of $\mathcal{C}$ can be assumed to be of the form:
\begin{equation} \label{eq:parity_structure_equality}
H  =  \left[ \begin{array}{ccc|ccc|c} Q_1 &&& I_{\delta-1} &&&\\  & \ddots & && \ddots & &0  \\ & & Q_{(\frac{k}{r})} &&& I_{\delta-1} & \\  \hline \\& A & && 0 && I_{d-\delta} \end{array} \right],
\end{equation}
where \begin{equation}
A = \left[A_1 \mid A_2 \mid \ldots \mid A_{\frac{k}{r}}\right ]
\end{equation}
and $\forall i \in \left[ \frac{k}{r}\right]$, the matrix
 \begin{equation}
 \left[ \begin{array}{c|c|c} Q_i & I_{\delta-1} & 0\\ A_i & 0 & I_{d-\delta} \end{array} \right]
 \end{equation}
generates an $[r+d-1, d-1, r+1]$ MDS code.
The matrices $Q_i$ and $A_i$ appearing above are of sizes $(\delta-1) \times r$ and $(d-\delta) \times r$ respectively.
\end{thm}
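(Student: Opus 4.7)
The plan is to apply Theorem~\ref{lem:step1} to pin down a canonical partition of $[n]$, and then read off the asserted structure of $H$. Theorem~\ref{lem:step1} supplies pairwise disjoint supports $S_1,\ldots,S_a$ with $a\geq k/r$, each of size $r+\delta-1$, such that $\mathcal{C}|_{S_i}$ is $[r+\delta-1,r,\delta]$ MDS. The hypothesis $d<r+2\delta-1$ amounts to $d-\delta<r+\delta-1$, so the residue set $T:=[n]\setminus\bigcup_{i=1}^{k/r} S_i$, of size $n-\tfrac{k}{r}(r+\delta-1)=d-\delta$, cannot accommodate another disjoint support of size $r+\delta-1$; this pins $a=k/r$, and a further counting argument forces $|S_i\cap[k]|=r$ for every $i$, so that the $k/r$ supports cover all information coordinates. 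In particular $|T|=d-\delta\geq 0$ gives $\delta\leq d$. I then reorder columns so that within each $S_i$ the $r$ message positions come first, followed by the $\delta-1$ remaining positions of $S_i$; the groups are listed in order, and the $d-\delta$ coordinates of $T$ come last.

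Next, I build $H$ in two layers. Each local parity-check $H_i^{(\mathrm{loc})}$ is $(\delta-1)\times(r+\delta-1)$; since $\mathcal{C}|_{S_i}$ is MDS, any $\delta-1$ columns of $H_i^{(\mathrm{loc})}$ are linearly independent, so in particular the $\delta-1$ columns on the non-message positions of $S_i$ form an invertible square submatrix. Row operations therefore put $H_i^{(\mathrm{loc})}$ into the form $[Q_i \mid I_{\delta-1}]$, and assembling across $i$ produces the top $\tfrac{k}{r}(\delta-1)$ rows of $H$ in the required block-diagonal shape, with zeros on $T$. Using these identity blocks, the remaining $(n-k)-\tfrac{k}{r}(\delta-1)=d-\delta$ rows can be cleared on the local-parity columns, giving the form $[B\mid 0\mid C]$ with $C$ of size $(d-\delta)\times(d-\delta)$. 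The crux is invertibility of $C$: otherwise a nontrivial left-combination of these bottom rows would produce a dual codeword of shape $[B'\mid 0\mid 0]$, nonzero (the bottom rows are independent of the top block) and supported on the message coordinates $[k]$, contradicting the systematic structure of $\mathcal{C}$. Left-multiplying by $C^{-1}$ then yields $[A\mid 0\mid I_{d-\delta}]$, and partitioning $A=[A_1\mid\cdots\mid A_{k/r}]$ along the message blocks gives the stated shape.

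Finally, to verify that $M_i := \left[\begin{smallmatrix} Q_i & I_{\delta-1} & 0 \\ A_i & 0 & I_{d-\delta}\end{smallmatrix}\right]$ generates an $[r+d-1,d-1,r+1]$ MDS code, I observe that the rows of $M_i$ are precisely the non-vanishing restrictions of rows of $H$ to the coordinate set $S_i\cup T$, so the row span of $M_i$ equals $(\mathcal{C}^\perp)|_{S_i\cup T}=(\mathcal{C}^{S_i\cup T})^\perp$. The identity blocks exhibit $d-1$ linearly independent rows in $M_i$, giving $\dim\mathcal{C}^{S_i\cup T}=(r+d-1)-(d-1)=r$. The minimum distance of $\mathcal{C}^{S_i\cup T}$ is at least $d$ (every nonzero codeword of $\mathcal{C}$ has weight $\geq d$) and at most $d$ (Singleton on parameters $[r+d-1,r]$), so $\mathcal{C}^{S_i\cup T}$ is $[r+d-1,r,d]$ MDS and its dual, generated by $M_i$, is $[r+d-1,d-1,r+1]$ MDS as asserted. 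The main obstacles are establishing that the $k/r$ primary supports really cover $[k]$ and proving invertibility of $C$; both are made accessible by the hypothesis $d<r+2\delta-1$.
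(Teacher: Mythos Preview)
Your two-step layout mirrors the paper's proof exactly: first pin $a=k/r$ and $\delta\le d$ from optimality together with Theorem~\ref{lem:step1}, then identify $M_i$ as a parity-check matrix of the shortened code $\mathcal{C}^{S_i\cup T}$ and invoke Singleton to conclude it is MDS. The arguments you give for $a=k/r$, for $\delta\le d$ (via $|T|=d-\delta\ge 0$), and for the MDS conclusion are essentially the paper's, with the welcome addition that you actually spell out how the block form of $H$ emerges (the paper simply asserts that \eqref{eq:parity_structure_equality} follows from Step~1 and Theorem~\ref{lem:step1}).

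There is one genuine gap. Your invertibility argument for $C$ hinges on the claim ``a further counting argument forces $|S_i\cap[k]|=r$,'' which you do not supply, and which is not obvious: Theorem~\ref{lem:step1} gives disjointness and the MDS structure of the $S_i$, but says nothing about how the \emph{original} information set $[k]$ sits inside $\bigcup_i S_i$. One can show $|S_i\cap[k]|\le r$ (columns of $G$ indexed by $S_i\cap[k]$ are distinct unit vectors, and $\mathrm{rank}\,G|_{S_i}=r$), but the reverse inequality $[k]\subseteq\bigcup_i S_i$ needs a separate argument. A cleaner route that bypasses this entirely: since $|T|=d-\delta<d$, the shortened code $\mathcal{C}^{T}$ is zero, hence $\dim\mathcal{C}|_{\bigcup_i S_i}=k$ and therefore $\dim\bigl((\mathcal{C}^\perp)^{\bigcup_i S_i}\bigr)=\tfrac{k}{r}(r+\delta-1)-k=\tfrac{k}{r}(\delta-1)=\dim W_{k/r}$, so $\ker(\pi_T)=W_{k/r}$ and the restriction $\pi_T:\mathcal{C}^\perp\to\mathbb{F}_q^{T}$ is surjective. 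You can then directly choose the bottom $d-\delta$ rows of $H$ to restrict to $I_{d-\delta}$ on $T$, giving the form \eqref{eq:parity_structure_equality} after column reordering without ever needing to locate the original message coordinates inside the $S_i$. (Equivalently: the theorem only claims the form ``up to a reordering of columns,'' so you are free to declare a new information set consisting of $r$ coordinates from each $S_i$.)
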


\begin{proof}
 We will prove Theorem \ref{thm:necessary_conditions_info_locality} in two steps. For an optimal $(r, \delta)_i$ code, we will show that
\begin{itemize}
\item{\underline{Step $1$}}: $a = \frac{k}{r}$ and $\delta \leq d$. These, along with Theorem \ref{lem:step1} will directly mean that the matrix $H$ has the form given in \eqref{eq:parity_structure_equality}.
\item{\underline{Step $2$}}: Secondly, we will show that $\forall i \in \left[ \frac{k}{r}\right]$, the matrix
\begin{equation}
\left[ \begin{array}{c|c|c} Q_i & I_{\delta-1} & 0\\ A_i & 0 & I_{d-\delta} \end{array} \right]
\end{equation}
generates an $[r+d-1, d-1, r+1]$ MDS code.
\end{itemize}

\vspace{0.1in}

\underline{Proof of Step $1$}:
Let $a \geq \frac{k}{r} + 1$. Then, from Theorem \ref{lem:step1} it follows that
\begin{eqnarray}
 n & \geq & a (r+\delta-1) \label{eq:d_delta}\\
   & \geq & (\frac{k}{r} + 1) (r+\delta-1) \nonumber  \\
   &   =  & k + \frac{k}{r}(\delta -1) + r +\delta -1 \nonumber \\
  &  > & k +d - \delta + \frac{k}{r}(\delta -1) \label{eq:a_equal_kbyr},
\end{eqnarray}
where \eqref{eq:a_equal_kbyr} follows from the assumption that $d < r+2\delta-1$. But \eqref{eq:a_equal_kbyr} contradicts the assumption the code is optimal (see \eqref{eq:bound_info_locality}) and hence $a = \frac{k}{r}$.

Next, in order to show that $\delta \leq d$, first note from \eqref{eq:d_delta} that the length of an optimal $(r, \delta)_i$ code must be at least $\frac{k}{r}(r + \delta - 1)$. But, if one assumes $\delta  > d$, then from \eqref{eq:bound_info_locality}, we get that, under optimality,
\begin{eqnarray*}
n & = & d + k - 1 + \left(\frac{k}{r} - 1\right)(\delta - 1) \\
  &  <  & \delta + k - 1 + \left(\frac{k}{r} - 1\right)(\delta - 1) \\
  & = & \frac{k}{r}(r + \delta - 1),
\end{eqnarray*}
which results in a contradiction. Hence we conclude that, under optimality, $\delta \leq d$.

\vspace{0.1in}

\underline{Proof of Step $2$}:
From Theorem \ref{lem:step1} and step $1$, we get that the parity check matrix, $H$, for the code $\mathcal{C}$ has the form (up to permutation of columns) given in \eqref{eq:parity_structure_equality}. Equivalently, the generator matrix, $G$, of $\mathcal{C}$, which, up to a permutation of columns is of the form
\begin{equation} \label{eq:generator_structure_equality}
G  =  \left[ \begin{array}{ccc|ccc|c} I_r &&& Q_1^t &&& A_1^t\\  & \ddots &&& \ddots & &\vdots  \\ &&I_r &&& Q_{\frac{k}{r}}^t & A_{\frac{k}{r}}^t \end{array} \right].
\end{equation}
Let $T$ denote the index set for the last $d - \delta$ columns of $G$ (i.e., the columns corresponding to $A_i$s) and consider a shortened code $\mathcal{C}^S$ of $\mathcal{C}$, where the $S = S_1 \cup T$. Note that $\mathcal{C}^S$ is generated by the matrix $G_S = [I_r \mid Q_1^t\mid A_1^t]$ and hence $\mathcal{C}^S$  is an $[r + d - 1, r, d_S]$ code, where $d_S$ denotes the minimum distance of $\mathcal{C}^S$. Clearly, $d_S \geq d$, the minimum distance of the code $\mathcal{C}$, since shortening a code only increases the minimum distance. This means that  $\mathcal{C}^S$  has parameters $[r + d - 1, r, d]$, i.e., $\mathcal{C}^S$ is MDS and so is its dual.
\end{proof}
\vspace{0.1in}

\begin{cor}
If $r|k$, $d < r+2\delta-1$ and equality is achieved in \eqref{eq:bound_info_locality}, then
    \begin{equation}
    d^{\perp}_{i} = r + i \ \ \ \ \ 1 \leq i \leq \delta-1.
    \end{equation}
\end{cor}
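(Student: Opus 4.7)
The plan is to sandwich $d^{\perp}_i$ between matching upper and lower bounds, both equal to $r+i$, by exploiting the rigid structural information that the hypotheses impose on $\mathcal{C}$ through Theorems~\ref{lem:step1} and~\ref{thm:necessary_conditions_info_locality}.

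For the upper bound, I would invoke Theorem~\ref{lem:step1}, which guarantees that the shortened dual $(\mathcal{C}^{\perp})^{S_1}$ is an $[r+\delta-1,\ \delta-1,\ r+1]$ MDS code. Since the $i$-th GHW of any MDS code is $n - k + i$, the $i$-th GHW of $(\mathcal{C}^{\perp})^{S_1}$ equals $(r+\delta-1)-(\delta-1)+i = r+i$ for $1 \leq i \leq \delta-1$. Lifting any $i$-dimensional minimum-support subcode of $(\mathcal{C}^{\perp})^{S_1}$ by zero-padding yields an $i$-dimensional subcode of $\mathcal{C}^{\perp}$ of support size $r+i$, so $d^{\perp}_i \leq r+i$.

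For the lower bound, the entire task reduces to showing $d^{\perp}_1 \geq r+1$: the strict monotonicity $d^{\perp}_1 < d^{\perp}_2 < \cdots$ of generalized Hamming weights then forces $d^{\perp}_i \geq d^{\perp}_1 + (i-1) = r+i$. To prove $d^{\perp}_1 \geq r+1$, I would use the explicit parity-check matrix $H$ from~\eqref{eq:parity_structure_equality}, which serves as a generator matrix of $\mathcal{C}^{\perp}$. Write any nonzero $\mathbf{v} \in \mathcal{C}^{\perp}$ as $\mathbf{v} = \mathbf{u} H$ and decompose $\mathbf{u} = (\mathbf{u}^{(1)}, \ldots, \mathbf{u}^{(k/r)}, \mathbf{u}^{(\mathrm{bot})})$ along the row-block partition of $H$. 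If $\mathbf{u}^{(\mathrm{bot})} = 0$, then $\mathbf{v}$ is supported on the pairwise disjoint union (Theorem~\ref{lem:step1}) of those $S_i$ with $\mathbf{u}^{(i)} \neq 0$, and each nonzero contribution $\mathbf{u}^{(i)}[Q_i \mid I_{\delta-1}]$ is a nonzero codeword of the local $[r+\delta-1,\delta-1,r+1]$ MDS dual and hence has weight at least $r+1$. If $\mathbf{u}^{(\mathrm{bot})} \neq 0$, let $T$ denote the index set of the last $d-\delta$ columns of $H$; then $\mathbf{v}_T = \mathbf{u}^{(\mathrm{bot})} \neq 0$, and the restriction $\mathbf{v}_{S_1 \cup T}$ is a nonzero codeword of the $[r+d-1,d-1,r+1]$ MDS code guaranteed by Theorem~\ref{thm:necessary_conditions_info_locality}, yielding $\mathrm{wt}(\mathbf{v}) \geq \mathrm{wt}(\mathbf{v}_{S_1 \cup T}) \geq r+1$.

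The main technical point is the bookkeeping in the second case of the lower bound argument: verifying that no contribution from the other row blocks leaks into the restriction $\mathbf{v}_{S_1 \cup T}$. This is immediate from two consequences already available, namely the pairwise disjointness of the $S_i$'s and the vanishing of the top-block rows of $H$ on $T$. Once $d^{\perp}_1 = r+1$ is secured, strict monotonicity of GHWs together with the MDS upper bound closes the chain $r+i \leq d^{\perp}_i \leq r+i$ and the corollary follows.
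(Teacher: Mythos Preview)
Your argument is correct. The paper states this corollary without proof, as an immediate consequence of Theorems~\ref{lem:step1} and~\ref{thm:necessary_conditions_info_locality}; your sandwich argument---the upper bound from the MDS local dual $(\mathcal{C}^{\perp})^{S_1}$ and the lower bound $d^{\perp}_1\ge r+1$ via the block decomposition of $H$ in~\eqref{eq:parity_structure_equality} together with strict monotonicity of the GHWs---is exactly the natural way to extract it from those two theorems, and the bookkeeping you flag (no leakage into $S_1\cup T$ from the other row blocks) is indeed immediate from the disjointness of the $S_i$ and the zero block above $I_{d-\delta}$.
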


\vspace{0.1in}


%

\vspace{0.1in}

\section{Codes with All-Symbol Locality} \label{sec:all_symbol_locality}

In this section, we study $(r,\delta)_a$ codes for the case when $(r+\delta-1)|n$ and $\delta \leq d$.
Firstly, for the case when $n=\lceil \frac{k}{r} \rceil  (r+\delta-1)$, we will give an explicit construction of a code with all-symbol locality by splitting this time, rows of the parity check matrix of an appropriate MDS code.  We will refer to this as the parity-splitting construction.
The code so obtained is optimal with respect to \eqref{eq:bound_info_locality}.
We will also show the existence of optimal codes with all-symbol locality
without the restriction $n=\lceil \frac{k}{r} \rceil  (r+\delta-1)$. The proof of this theorem uses random coding arguments similar to those used for proving Theorem 17 in \cite{GopHuaSimYek}.

\subsection{Explicit and Optimal $(r,\delta)_a$ Codes via Parity-Splitting}
\begin{thm}
 Let $n=\lceil \frac{k}{r} \rceil  (r+\delta-1)$ and $\delta \leq d$. Then, for $q > n$, there exists an explicit and optimal $(r,\delta)_a$ code over $\mathbb{F}_q$.
\end{thm}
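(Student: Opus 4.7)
The plan is to construct the code by the parity-splitting technique flagged in the section preamble: start with a Reed--Solomon MDS code and split its first $\delta-1$ parity-check rows along a partition of $[n]$ into $m = \lceil k/r \rceil$ local groups of size $r+\delta-1$. This dualizes the column-splitting pyramid construction of Section~\ref{sec:pyramid_codes}.

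Concretely, partition $[n]$ into disjoint groups $G_1,\ldots,G_m$ of size $r+\delta-1$ (possible since $n = m(r+\delta-1)$), and choose distinct elements $a_1,\ldots,a_n$ of $\mathbb{F}_q$ (possible since $q>n$). Let $d$ be the value prescribed by the bound \eqref{eq:bound_locality}; the hypothesis $\delta \leq d$ guarantees $\delta - 1 \leq d-1$. Let $\mathcal{C}_0$ be the $[n,\,n-d+1,\,d]$ Reed--Solomon code on these evaluation points, whose parity-check matrix $H_0$ has rows $\mathbf{h}_\ell = (a_1^\ell,\ldots,a_n^\ell)$, $\ell = 0,\ldots,d-2$. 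Form $H$ by replacing each row $\mathbf{h}_\ell$ with $\ell \leq \delta-2$ by its $m$ restrictions $\mathbf{h}_\ell \cdot \mathbf{1}_{G_j}$, $j=1,\ldots,m$, and keeping the remaining $d-\delta$ rows untouched; set $\mathcal{C} := \ker H$. A row count gives $m(\delta-1)+(d-\delta)=n-k$, using $n = m(r+\delta-1)$ together with the optimal-$d$ formula.

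Three things need to be checked. (i) Locality: the $\delta-1$ split rows sitting on a given $G_j$ form a submatrix of support exactly $G_j$ whose restriction to $G_j$ is a $(\delta-1)\times(r+\delta-1)$ Vandermonde matrix on distinct points, so any $\delta-1$ of its columns are independent; this is the equivalent condition in Definition~1 for $(r,\delta)$-locality of every coordinate in $G_j$. (ii) Distance $\geq d$: each original $\mathbf{h}_\ell$ is recovered as the sum of its split pieces, so $\mathrm{rowspace}(H_0) \subseteq \mathrm{rowspace}(H)$, whence $\mathcal{C} \subseteq \mathcal{C}_0$, a code of distance $d$. (iii) Dimension $k$: this requires $\mathrm{rank}(H) = n-k$ and is the only non-routine step; a linear dependence among the rows of $H$ yields, on each group $G_j$, a polynomial
\[ p_j(x) \;=\; \sum_{\ell=0}^{\delta-2} c_{j,\ell}\, x^\ell \;+\; \sum_{\ell=\delta-1}^{d-2} c_\ell\, x^\ell \]
of degree $\leq d-2$ that vanishes at the $r+\delta-1$ distinct points $\{a_i : i \in G_j\}$; since the parameter setup forces $r+\delta-1 > d-1$, we get $p_j \equiv 0$, and equating coefficients kills all $c_{j,\ell}$ and $c_\ell$. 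Combining (i)--(iii), $\mathcal{C}$ is an $(r,\delta)_a$ code with parameters $[n,k,\geq d]$; matching the upper bound of Theorem~\ref{thm:information_locality} (applicable because all-symbol locality implies information locality) gives $d(\mathcal{C}) = d$, so $\mathcal{C}$ is optimal. The main obstacle is the rank verification in (iii), and it is settled by this polynomial argument anchored on the inequality $r+\delta-1 > d-1$ afforded by the parameter regime.
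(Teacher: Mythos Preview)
Your argument is correct and follows essentially the same parity-splitting construction as the paper: start from the $(d-1)\times n$ Vandermonde parity-check matrix of an $[n,\,n-d+1,\,d]$ Reed--Solomon code, split its first $\delta-1$ rows across the $m=\lceil k/r\rceil$ blocks of size $r+\delta-1$, and verify locality, dimension, and distance. The one place you add detail is the rank verification in (iii): the paper simply asserts independence ``due to the Vandermonde structure of $H'$,'' whereas you spell out the polynomial argument, correctly using that in this parameter regime $d-1 = \lceil k/r\rceil r - k + \delta - 1 < r+\delta-1$ (equivalently $d\le r+\delta-1$, cf.\ the Remark following the theorem), so each $p_j$ of degree $\le d-2$ with $r+\delta-1$ roots must vanish identically.
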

\begin{proof}
 Let $H'$ be the parity check matrix of an $[n,k',d]$ Reed-Solomon code over $\mathbb{F}_q$, where
$k'= k+ (\lceil \frac{k}{r} \rceil  - 1)(\delta-1)$ and
$d = n-k'+1  = n-k+1- (\lceil \frac{k}{r} \rceil - 1)(\delta-1)$.  Such codes exist if $q > n$.
We choose $H'_{(n-k') \times n}$ to be a Vandermonde  matrix.  Let
\begin{equation}
H'= \left[ \begin{array}{c}
     Q_{(\delta-1) \times n} \\
     A_{(n-k'+1-\delta)\times n}
    \end{array} \right].
\end{equation}
 We partition the matrix $Q$ in terms of submatrices as shown below
\begin{equation}
Q = \left [ Q_1 \mid Q_2 \mid \ldots \mid Q_{\lceil \frac{k}{r} \rceil} \right],
\end{equation}
where
$Q_i, 1 \leq i \leq {\lceil \frac{k}{r} \rceil}$  are matrices of size $ \delta-1 \times (r+\delta - 1)$.
Next consider the code
$\mathcal{C}$ whose parity check matrix, $H$, is obtained by splitting the first $\delta -1$ rows of $H'$ as follows:
\begin{equation}
 H=\left[ \begin{array}{ccc}
            Q_1 &&\\
	    & \ddots &\\
	    &&Q_{\lceil \frac{k}{r} \rceil}\\
	    \hline
	    & A &
           \end{array}
\right].
\end{equation}
Due to the Vandermonde structure of $H'$, all rows of $H$ are linearly independent. Thus
 $\text{Rank}(H)=n-k'+(\lceil \frac{k}{r} \rceil - 1)(\delta -1)$. Thus
$\text{dim}(\mathcal{C})= k' - (\lceil \frac{k}{r} \rceil - 1)(\delta -1) = k$.
It is also clear from the construction that this code is an $(r,\delta)_a$ code.

Let $d_{min}$ be the minimum distance of $\mathcal{C}$. Since any set of columns\footnote{set here indicates indices of the columns} of $H$
which are linearly dependent are also linearly dependent in $H'$,
$d_{min} \geq d =  n-k+1- (\lceil \frac{k}{r} \rceil - 1)(\delta-1) $.
But, by \eqref{eq:bound_info_locality}, we must have $d_{min}\leq d$. Hence $d_{min}=d$.
\end{proof}
\vspace{0.1in}
\begin{note}
 In the above construction, let $k=\alpha r+\beta$. Let $\delta_k = r-\beta$.
Then
\begin{eqnarray}
 d-\delta &=& n-k - (\lceil \frac{k}{r} \rceil )(\delta-1) \nonumber \\
	  &=& r\frac{k+\delta_k}{r}-k \nonumber \\
	  &=& \delta_k.
\end{eqnarray}
In particular if $r|k$, $\delta_k=0$ and hence $d=\delta$.
\end{note}

\subsection{Existence of Optimal $(r,\delta)$ codes with All-Symbol Locality}
Here, we will state a couple of definitions and a lemma from \cite{GopHuaSimYek}, which will be useful in proving the existence of optimal codes with all symbol locality.

\vspace{0.1in}

\begin{defn}[$k$-core \cite{GopHuaSimYek}]
Let $L$ be a subspace of $\mathbb{F}_q^n$ and $S \subseteq [n]$ be a set of size $k$. $S$ is said to be a $k$-core
for $L$ if for all vectors $v \in L$, $\text{Supp}(\mathbf{v}) \nsubseteq S$.
\end{defn}

\vspace{0.1in}

$S$ is a $k$-core of a linear code $\mathcal{C}$ if and only if the $k$ columns of the generator matrix of the dual code $\mathcal{C}^{\perp}$ corresponding to $S$ are linearly independent.

\vspace{0.1in}

\begin{defn}[Vectors in General Position Subject to $L$ \cite{GopHuaSimYek}]
 Let $L$ be a subspace of $\mathbb{F}_q^n$. Let $G = [\mathbf{g}_1, \cdots , \mathbf{g}_n ]$ be a $k \times n$ matrix over $\mathbb{F}_q$. The columns of $G$, $\{\mathbf{g}_i \}_{i=1}^n$ are said to be in general position subject to $L$ if:
\begin{itemize}
 \item Row space of $G$, denoted by Row$(G) \subseteq L^\perp$.
 \item For all $k$-cores $S$ of $L$, we have $\text{Rank}(G|_S ) = k$.
\end{itemize}
\end{defn}

\vspace{0.1in}

\begin{lem}[Lemma 14 of \cite{GopHuaSimYek}] \label{exist}
 Let $n,k,q$ be such that $q > kn^k$. Let $L$ be a subspace of $\mathbb{F}_q^n$ and $ 0 < k \leq n - \text{dim}(L)$.
 Then $\exists$ a set of vectors $\{\mathbf{g}_i\}_{i=1}^n$ in $\mathbb{F}_q^k$
that are in general position subject to $L$.
\end{lem}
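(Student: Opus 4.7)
The plan is an algebraic existence argument via Schwartz--Zippel. Write $m = \dim(L^\perp) = n - \dim(L) \geq k$ and fix an $m \times n$ generator matrix $H$ for $L^\perp$. The condition $\text{Row}(G) \subseteq L^\perp$ is equivalent to $G = UH$ for some $k \times m$ matrix $U$, so it suffices to exhibit $U \in \mathbb{F}_q^{k \times m}$ for which $UH|_S$ is an invertible $k \times k$ matrix for every $k$-core $S$ of $L$, where $H|_S$ denotes the $m \times k$ submatrix of $H$ indexed by $S$.

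Recall from the remark immediately after the definition of $k$-core that $S$ is a $k$-core of $L$ precisely when the columns of $H$ indexed by $S$ are linearly independent; equivalently, $H|_S$ has rank $k$. I would then introduce the polynomial
\begin{equation*}
P(U) \;=\; \prod_{S} \det\bigl(U \cdot H|_S\bigr)
\end{equation*}
in the $km$ indeterminate entries of $U$, where $S$ ranges over the finite collection of all $k$-cores of $L$, a set of cardinality at most $\binom{n}{k}$. For each such $S$, since $H|_S$ has rank $k$, one can pick $k$ rows of $H|_S$ that are linearly independent and take $U$ to be the corresponding $0/1$ row-selection matrix; then $\det(UH|_S)$ evaluates to a nonzero scalar, showing that this factor is not the zero polynomial. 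Because polynomial rings over a field are integral domains, the product $P$ is not identically zero.

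Each factor $\det(UH|_S)$ has total degree $k$ in the entries of $U$, so $\deg(P) \leq k\binom{n}{k} \leq k n^k$. The hypothesis $q > k n^k$ then allows the Schwartz--Zippel lemma to produce a specialization $U \in \mathbb{F}_q^{k \times m}$ with $P(U) \neq 0$. Letting $\mathbf{g}_i$ be the $i$-th column of $G = UH$ yields the required vectors: the rows of $G$ lie in $L^\perp$ by construction, and for every $k$-core $S$ the matrix $G|_S = UH|_S$ is invertible, so $\text{Rank}(G|_S) = k$.

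The main obstacle I anticipate is verifying that $P$ is a nonzero polynomial; the remaining steps are routine once the dual characterization of $k$-cores has been invoked. An alternative greedy construction, building the columns $\mathbf{g}_i$ one at a time over $\mathbb{F}_q$ while at each stage excluding the proper subspaces that would spoil some $k$-core, is also feasible but would require a more delicate bookkeeping than the clean polynomial-degree argument sketched above.
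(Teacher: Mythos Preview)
The paper does not supply its own proof of this lemma; it is quoted as Lemma~14 of \cite{GopHuaSimYek} and used as a black box. Your Schwartz--Zippel argument is correct and is essentially the proof given in \cite{GopHuaSimYek}: parametrize $G=UH$ with $H$ a generator matrix for $L^\perp$, take the product of the determinants $\det(UH|_S)$ over all $k$-cores $S$, verify it is a nonzero polynomial of total degree at most $k\binom{n}{k}\le kn^k<q$, and conclude.
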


\vspace{0.1in}

Using the above lemma, we will now prove the existence of optimal $(r,\delta)$ codes for a general set of parameters.

\vspace{0.1in}

\begin{thm}
 Let $q > kn^k$, $(r+\delta-1)|n$ and $\delta \leq d$. Then there exists an optimal $(r,\delta)_a$  code over $\mathbb{F}_q$
\end{thm}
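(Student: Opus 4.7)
The plan is to adapt the random construction of Theorem~17 in \cite{GopHuaSimYek} to the $(r,\delta)$ setting via Lemma~\ref{exist}. Set $t=n/(r+\delta-1)$ and partition $[n]$ into disjoint blocks $S_1,\ldots,S_t$ of size $r+\delta-1$. For each block, choose a $(\delta-1)$-dimensional subspace $L_j\subseteq\mathbb{F}_q^n$ supported on $S_j$ whose restriction $L_j|_{S_j}$ is an $[r+\delta-1,\delta-1,r+1]$ MDS code; such $L_j$ exists since $q>n\ge r+\delta-1$. Let $L=L_1\oplus\cdots\oplus L_t$, so $\dim L=t(\delta-1)$ and $n-\dim L=tr$. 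The hypothesis $\delta\le d$ forces $n\ge k+\lceil k/r\rceil(\delta-1)$, whence $tr\ge k$, and this together with $q>kn^k$ lets me invoke Lemma~\ref{exist} to obtain vectors $g_1,\ldots,g_n\in\mathbb{F}_q^k$ in general position subject to $L$. Take $\mathcal{C}$ to be generated by $G=[g_1\mid\cdots\mid g_n]$.

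Because $\text{Row}(G)\subseteq L^\perp$, we have $L\subseteq\mathcal{C}^\perp$, and $L_j$ supplies local parity checks on $S_j$. The MDS property of $L_j|_{S_j}$ ensures that any $\delta-1$ columns of its generator are linearly independent, which translates into $\mathcal{C}|_{S_j}$ having minimum distance at least $\delta$; hence every coordinate enjoys $(r,\delta)$ locality. Existence of at least one $k$-core of $L$ (guaranteed by $tr\ge k$) combined with general position gives $\dim\mathcal{C}=k$. It remains to show $d(\mathcal{C})\ge d^*:=n-k+1-(\lceil k/r\rceil-1)(\delta-1)$, after which Theorem~\ref{thm:information_locality} supplies the matching upper bound, giving optimality.

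I would argue the distance via $d(\mathcal{C})=n-Z$, where $Z=\max\{|T|:\text{rank}(G|_T)<k\}$. Monotonicity of rank under subsets together with the general-position property forces, for any $T$ with $\text{rank}(G|_T)<k$, every $k$-subset of $T$ to fail to be a $k$-core, i.e., to contain $\text{supp}(v)$ for some nonzero $v\in L$. The structure of $L$ as a direct sum of MDS subspaces renders this concrete in both directions: every nonzero $v\in L$ satisfies $|\text{supp}(v)\cap S_j|\ge r+1$ whenever $v|_{S_j}\ne 0$, and conversely each $(r+1)$-subset of $S_j$ is the support of some nonzero element of $L_j$. Writing $\ell_j=|T\cap S_j|$, $\text{rank}(G|_T)<k$ therefore becomes equivalent to: no $k$-element subfamily of $T$ can be chosen with at most $r$ elements from each block, that is, $\sum_j\min(\ell_j,r)\le k-1$.

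The main remaining step is the combinatorial optimization $\max\sum_j\ell_j$ subject to $\ell_j\le r+\delta-1$ and $\sum_j\min(\ell_j,r)\le k-1$. Since the gap $\ell_j-\min(\ell_j,r)\le\delta-1$ is nonzero only when $\ell_j>r$, which can occur for at most $\lfloor(k-1)/r\rfloor=\lceil k/r\rceil-1$ blocks, a short calculation yields $|T|\le k-1+(\lceil k/r\rceil-1)(\delta-1)$, and hence $d(\mathcal{C})\ge d^*$. I expect the main care points to be (a) the bi-directional equivalence between rank deficiency of $G|_T$ and the block-combinatorial condition on $T$, which uses both sides of the MDS property of $L_j|_{S_j}$ (minimum weight and support-flexibility), and (b) the elementary identity $\lfloor(k-1)/r\rfloor=\lceil k/r\rceil-1$, which makes the final bound match $d^*$ on the nose.
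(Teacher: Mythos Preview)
Your proposal is correct and follows essentially the same route as the paper's own proof: the same block-diagonal MDS subspace $L$, the same invocation of Lemma~\ref{exist}, the same characterization of $k$-cores via $|K\cap S_j|\le r$, and the same combinatorial bound $|T|\le k-1+(\lceil k/r\rceil-1)(\delta-1)$ using $\lfloor(k-1)/r\rfloor=\lceil k/r\rceil-1$. The paper phrases the final counting with the auxiliary quantities $b_\ell=|\{j:|T\cap S_j|=r+\ell\}|$, whereas you work directly with $\sum_j\min(\ell_j,r)$, but the two arguments are the same in substance.
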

\begin{proof}
Proof is similar to the proof of Theorem $17$ of \cite{GopHuaSimYek}. Let $n = (r+\delta-1)t$. Let $\{P_1, \cdots, P_t\}$ be a partition of $[n]$, where $|P_i| = r+\delta-1, 1\leq i \leq t$. Let $Q_i$ be the parity check matrix of an $[r+\delta-1,r,\delta]$ MDS code with support $P_i$. Consider
\begin{equation}
H'_{t(\delta -1) \times n}
	      = \begin{bmatrix}
               Q_1 &    &  &  \\
	          & Q_2 &  &  \\
	        &   & \ddots &  \\
	          &    &  &Q_t  \\
             \end{bmatrix}.
\end{equation}

Let $L=\text{Rowspace}(H')$. Since $\delta \leq d$, \eqref{eq:bound_info_locality} gives that
\begin{equation}
 n-k  \geq  \left\lceil \frac{k}{r} \right\rceil(\delta -1) \geq \frac{k}{r}(\delta -1) \nonumber \\
\end{equation}
Rearranging the above equation, we get
\begin{equation}
k \leq n - t(\delta-1)
\end{equation}
which implies the existence of $k$-cores for $L$ exist. Thus, from Lemma \ref{exist},  $\exists \ \{\mathbf{g}_i\}_{i=1}^{n}$, $\mathbf{g}_i \in \mathbb{F}_q^k$ which are in general position subject to $L$. Now consider the code $\mathcal{C}$ whose generator matrix $G_{k\times n}=[\mathbf{g}_1 \cdots \mathbf{g}_n]$. Clearly, $\mathcal{C}$ is an $(r,\delta)_a$ code, whose length is $n$ and dimension is $k$. It remains to prove that $d_{min}(\mathcal{C}) = d$, given by the equality condition in \eqref{eq:bound_info_locality}. Towards this, we will show that (see next sub section) for any set $S \subseteq [n]$ such that $\text{Rank}(G|_S) \leq k-1$, it must be true that
\begin{equation} \label{eq:S_cardinality_allsymbloc}
 |S| \leq k-1+(\delta-1)\left( \displaystyle \lceil \frac{k}{r} \rceil -1 \right).
\end{equation}
Now the minimum distance of $\mathcal{C}$ is given by
\begin{equation}
 d= n- \max_{\substack{S \subseteq [n] \\ \text{Rank}(G|_S) \leq k - 1}} |S| \geq
n-k+1 - \left( \displaystyle \lceil \frac{k}{r} \rceil -1 \right)(\delta-1).
\end{equation}
Combining the above equation and \eqref{eq:bound_info_locality}, it follows that the code ${\cal C}$ has the distance given in the theorem statement.
\end{proof}

\vspace{0.1in}

\subsubsection{Proof of \eqref{eq:S_cardinality_allsymbloc}}

Let $S \subseteq [n]$ be such that $\text{Rank}(G|_S) \leq k-1$. Clearly, $S$ does not contain a $k$-core. Also, note that any $K \subseteq [n], |K|=k $, is a $k$-core for $L$ if and only if
\begin{equation}
|P_i \cap K| \leq r \ \ \forall \ i  \in [t].
\end{equation}
Thus there exists $\text{some } i \in [t]$ such that $|P_i \cap S| \geq r+1$.

Define
\begin{equation*}
 b_{\ell} := \left|\left\{ i \in [t] \arrowvert \  |P_i \cap S| = r+\ell \right\}\right| \ \ \ 1 \leq \ell \leq \delta -1.
\end{equation*}
For $1 \leq \ell \leq \delta -1$, consider the set, $S_{\ell}$, obtained from $S$ by dropping $\ell$ elements of $S$ from each of the $b_{\ell}$ sets $\{P_i | \ |P_i \cap S| = r + \ell\}$. Clearly, the set $\cup_{1 \leq \ell \leq \delta -1}S_{\ell}$ is an $|S|-b_1-2b_2-\cdots-(\delta-1)b_{\delta-1}$ core and thus
\begin{equation} \label{eq:nc1}
 |S|-(\delta-1)(\sum_{i=1}^{\delta-1} b_i)  \leq |S|-b_1-2b_2-\cdots-(\delta-1)b_{\delta-1} \leq k-1.
\end{equation}
Also if we pick $r$ co-ordinates from each $P_i$ such that $|P_i \cap S| \geq r + 1$, we get a $(r)(\sum_{i=1}^{\delta-1} b_i)$-core.
Thus,
\begin{equation} \label{eq:nc2}
 \sum_{i=1}^{\delta-1} b_i \leq \left\lfloor \frac{k-1}{r} \right\rfloor = \left \lceil \frac{k}{r} \right \rceil -1.
\end{equation}
 Combining \eqref{eq:nc1} and \eqref{eq:nc2}, we have
\begin{equation}
 |S| \leq k - 1 + \left(  \left \lceil \frac{k}{r} \right \rceil -1 \right)(\delta-1).
\end{equation}

\vspace{0.1in}

\section{An upper bound on the minimum distance of concatenated codes} \label{sec:concatenated_codes}

Consider a (serially) concatenated code (see \cite{For}, \cite{Dum}) having an $[n_1, k_1, d_1]$ code $\mathcal{A}$ as the inner code and an $[n_2, k_2, d_2]$ code $\mathcal{B}$ as the outer code.  Clearly, a concatenated code falls into the category of an $(r,\delta)_a$ code with $\delta = d_1$, $r = n_1 - d_1 + 1$.   Hence, the bound in \eqref{eq:bound_locality} applies to concatenated codes as well. Using the fact that a concatenated code has length $n = n_1n_2$, dimension $k = k_1k_2$, we obtain from the results of the present paper, the following upper bound on minimum distance $d$:
\begin{equation} \label{eq:bound_concatenated}
d \ \leq \ n_1n_2 - k_1k_2  +1 -\left(\left\lceil{\frac{k_1k_2}{n_1-d_1+1}}\right\rceil - 1\right)(d_1 - 1 ).
\end{equation}
Well known bounds on the minimum distance of a concatenated codes are
\begin{equation} \label{eq:distance_concatenated_upperbound_known}
d_1d_2 \leq d \ \leq \ n_1d_2.
\end{equation}
In practice, concatenated codes often employ an interleaver between the inner and outer codes in order to increase the minimum distance \cite{BenDivMonPol}.  In this case, while the upper bound in \eqref{eq:distance_concatenated_upperbound_known} no longer holds, the bound in \eqref{eq:bound_concatenated} continues to hold.

An asymptotic version of \eqref{eq:bound_concatenated} can be obtained, if we assume that both the component codes are MDS. Let $R = \frac{k}{n}$ and $\Delta = \frac{d}{n}$, respectively, denote the rate and the fractional distance of the concatenated code. Similarly let $R_1, \Delta_1, R_2, \Delta_2$ denote the corresponding parameters of the component codes. We consider the asymptotic case when both $n_1$ and $n_2$ tend to infinity. Also assume that $d_i, k_i$ increase in proportion with $n_i, \ i = 1, 2$. Using all these fractional parameters in \eqref{eq:bound_concatenated}, we get
\begin{equation} \label{eq:eq:bound_concatenated_asymptotic_temp}
\Delta \ \leq 1 - R_1R_2 - \Delta_1R_2 + \frac{R_2}{n_1} + \frac{d_1 - 1}{n_1n_2}.
\end{equation}
In the limit $n_i \rightarrow \infty, \ i = 1, 2$, singleton bound gives us $R_i = 1 - \Delta_i$. Using this fact, asymptotically \eqref{eq:eq:bound_concatenated_asymptotic_temp} becomes
\begin{equation} \label{eq:bound_concatenated_asymptotic}
\Delta \ \leq 1 - R_2 = \ 1 - \frac{R}{R_1}.
\end{equation}
We remark that this is the same asymptotic bound that one would get from \eqref{eq:distance_concatenated_upperbound_known}.

\bibliographystyle{IEEEtran}
\bibliography{Locality_bib}

\begin{thebibliography}{10}
\providecommand{\url}[1]{#1}
\csname url@samestyle\endcsname
\providecommand{\newblock}{\relax}
\providecommand{\bibinfo}[2]{#2}
\providecommand{\BIBentrySTDinterwordspacing}{\spaceskip=0pt\relax}
\providecommand{\BIBentryALTinterwordstretchfactor}{4}
\providecommand{\BIBentryALTinterwordspacing}{\spaceskip=\fontdimen2\font plus
\BIBentryALTinterwordstretchfactor\fontdimen3\font minus
  \fontdimen4\font\relax}
\providecommand{\BIBforeignlanguage}[2]{{%
\expandafter\ifx\csname l@#1\endcsname\relax
\typeout{** WARNING: IEEEtran.bst: No hyphenation pattern has been}%
\typeout{** loaded for the language `#1'. Using the pattern for}%
\typeout{** the default language instead.}%
\else
\language=\csname l@#1\endcsname
\fi
#2}}
\providecommand{\BIBdecl}{\relax}
\BIBdecl

\bibitem{GopHuaSimYek}
P.~Gopalan, C.~Huang, H.~Simitci, and S.~Yekhanin, ``On the locality of
  codeword symbols,'' \emph{Arxiv preprint arXiv:1106.3625}, 2011.

\bibitem{HuaCheLi}
C.~Huang, M.~Chen, and J.~Li, ``Pyramid codes: Flexible schemes to trade space
  for access efficiency in reliable data storage systems,'' in \emph{Network
  Computing and Applications, 2007. NCA 2007. Sixth IEEE International
  Symposium on}.\hskip 1em plus 0.5em minus 0.4em\relax IEEE, 2007, pp. 79--86.

\bibitem{BlaBraBruMen}
M.~Blaum, J.~Brady, J.~Bruck, and J.~Menon, ``{EVENODD:} an efficient scheme
  for tolerating double disk failures in {RAID} architectures,''
  \emph{Computers, IEEE Transactions on}, vol.~44, no.~2, pp. 192--202, 1995.

\bibitem{CorEngGoeGrcKleLeoSan}
P.~Corbett, B.~English, A.~Goel, T.~Grcanac, S.~Kleiman, J.~Leong, and
  S.~Sankar, ``Row-diagonal parity for double disk failure correction,'' in
  \emph{Proceedings of the 3rd USENIX Conference on File and Storage
  Technologies}, 2004, pp. 1--14.

\bibitem{Wei}
V.K. Wei, ``Generalized hamming weights for linear codes,'' \emph{Information
  Theory, IEEE Transactions on}, vol.~37, no.~5, pp. 1412--1418, 1991.

\bibitem{HelKloLevYtr}
T.~Helleseth, T.~Klove, V.I. Levenshtein, and O.~Ytrehus, ``Bounds on the
  minimum support weights,'' \emph{Information Theory, IEEE Transactions on},
  vol.~41, no.~2, pp. 432--440, 1995.

\bibitem{HufPle}
W.C. Huffman and V.~Pless, \emph{Fundamentals of error-correcting codes}.\hskip
  1em plus 0.5em minus 0.4em\relax Cambridge Univ Press, 2003.

\bibitem{For}
G.D. {Forney Jr}, ``Concatenated codes,'' 1966, {MIT} Press, Cambridge.

\bibitem{Dum}
I.~Dumer, ``Concatenated codes and their multilevel generalizations,''
  \emph{Handbook of Coding Theory}, vol.~2, pp. 1911--1988, 1998.

\bibitem{BenDivMonPol}
S.~Benedetto, D.~Divsalar, G.~Montorsi, and F.~Pollara, ``Serial concatenation
  of interleaved codes: Performance analysis, design, and iterative decoding,''
  \emph{Information Theory, IEEE Transactions on}, vol.~44, no.~3, pp.
  909--926, 1998.

\end{thebibliography}

\end{document}